\newtheorem{Thm}{Theorem}
\newtheorem{Prop}[Thm]{Proposition}
\theoremstyle{definition}
\begin{document}

\title{Reducing Circuit Resources in Grover's Algorithm via Constraint-Aware Initialization}

\author{Eunok Bae}
\email{eobae@etri.re.kr}
\affiliation{
Electronics and Telecommunications Research Institute, Daejeon 34129, Republic of Korea
}

\author{Jeonghyeon Shin}
\affiliation{
Center for Quantum Technology, Korea Institute of Science and Technology, Seoul 02792, Republic of Korea
}
\affiliation{
Department of Mathematics and Research Institute for Basic Sciences, Kyung Hee University, Seoul 02447, Republic of Korea
}

\author{Minjin Choi}
\email{mathcmj89@gmail.com}
\affiliation{
Center for Quantum Information R\&D, Korea Institute of Science and Technology Information, Daejeon 34141, Republic of Korea
}

\date{\today}

\begin{abstract}
Grover's search algorithm provides a quadratic speedup over classical brute-force search in terms of query complexity and is widely used as a versatile subroutine in numerous quantum algorithms, including those for combinatorial problems with large search spaces.
For such problems, it is natural to reduce the effective search space by incorporating problem constraints at the initialization step, which in Grover's algorithm can be achieved by preparing structured initial states that encode constraint information.
In this work, we present a systematic framework with a simple preprocessing procedure for constraint-aware initialization in Grover's algorithm, focusing on problems with linear constraints.
While such structured initial states can reduce the number of oracle queries required to obtain a solution, their preparation incurs additional circuit-level costs.
We therefore offer a conservative circuit-level resource analysis, showing that the resulting constraint-aware initialization can improve resource efficiency in terms of gate counts and circuit depth.
The validity of the framework is further demonstrated numerically using the exact-cover problem.
Overall, our results indicate that this approach serves as a practical baseline for achieving more resource-efficient implementations of Grover's algorithm compared to the standard uniform initialization.
\end{abstract}

\maketitle

\section{Introduction}
\label{sec:introduction}

Grover's search algorithm is one of the most prominent quantum algorithms, offering a quadratic speedup over brute-force methods for unstructured search problems~\cite{Grover1996}. 
Beyond its original formulation, it has been widely recognized as a versatile component across a broad range of quantum algorithms.
In particular, Grover's algorithm underlies amplitude amplification~\cite{Brassard2000, Montanaro2016} and has been adapted for quantum counting~\cite{Brassard1998, Aaronson2020}.
It has also been extended from the minimum finding algorithm~\cite{Durr1996} to more general iterative search schemes for optimization, such as Grover adaptive search~\cite{Baritompa2005, Gilliam2021}.
Recent work has explored integrating Grover's algorithm with quantum machine learning, for example in robotic kinematic optimization, and has reported potential performance gains on practical high-dimensional instances~\cite{Nigatu2025}.
This broad applicability highlights the role of Grover's algorithm not only as a breakthrough result in search problems but also as a foundational tool in quantum algorithm design.

Constraint satisfaction problems (CSPs), or more generally, combinatorial optimization problems (COPs) have been actively studied as natural targets for Grover-based approaches, with a steady stream of results reported in recent years~\cite{Campbell2019, Gilliam2021, Nagy2023, Ohno2024, Sano2024, Ominato2024}.
Since constraints are intrinsic to CSPs and ubiquitous in COPs, most Grover-based formulations incorporate them within the oracle, typically by penalizing constraint violations.
While there have also been proposals to encode constraint information already in the initial state~\cite{Metwalli2020, Mikuriya2024}, these approaches are often tailored to specific problems, and a systematic analysis of how constraint-aware initialization affects resource requirements and performance remains relatively limited.
This motivates a closer examination of initialization strategies that exploit available constraint structure beyond the oracle level.

In this work, we present a systematic framework with a simple preprocessing procedure for constraint-aware initialization in Grover's algorithm applied to combinatorial problems with linear constraints.
Our approach introduces a classical preprocessing step that identifies jointly implementable constraint sets for initial state preparation, covering cardinality constraints realized by Dicke states~\cite{Dicke1954} as well as parity-type information extracted from more general linear constraints. 
We incorporate this parity-type information via constructions based on Greenberger--Horne--Zeilinger~(GHZ) states~\cite{Greenberger1989} expressed in the $X$ basis (hereafter referred to as GHZ-type states). 
Such structured initialization can restrict the search to more relevant subspaces.

However, implementing constraint-aware initialization incurs additional circuit-level overhead, which motivates a careful resource analysis.
We therefore provide a conservative circuit-level resource analysis of the proposed constraint-aware initialization.
In this analysis, we focus on representative instances of our framework, namely the Dicke state with Hamming weight one and GHZ-type states, which capture the essential scaling behavior of cardinality and parity-type constraints.
In particular, we examine how the number and size of disjoint constraint sets incorporated into the initial state affect gate counts and circuit depth, and show that the overall resource efficiency improves as additional disjoint sets are included. 
Even in the worst-case scenario where the preprocessing ultimately retains only a single constraint set, the resulting initialization strategy remains more resource-efficient than the standard uniform initialization.
We further validate the proposed constraint-aware initialization scheme through numerical simulations on the exact-cover problem.
The simulation results are consistent with our theoretical analysis, thereby providing numerical support for the proposed framework.

This paper is organized as follows.
Section~\ref{sec:methods} introduces the proposed framework for constraint-aware initialization.
Section~\ref{sec:results} presents the theoretical and numerical analyses, and discusses the results.
Finally, Section~\ref{sec:conclusion} concludes the paper.

\section{Methods}
\label{sec:methods}

\subsection{Grover's algorithm with constraint-aware initialization}
\label{subsec:methods01}

We consider the problem of identifying an unknown binary string $x\in \{0, 1\}^{n}$, where $n \in \mathbb{N}$, that satisfies a given condition, typically encoded as a Boolean function $f: \{0, 1\}^{n} \rightarrow \{0, 1\}$.
The goal is to find a binary string $x$ such that $f(x)=1$.

Let $F \subseteq \{0, 1\}^{n}$ denote the search space.
In the standard Grover's algorithm, $F$ corresponds to the full search space $\{0, 1\}^{n}$. 
For the constraint-aware case considered in this work, $F$ denotes a reduced subspace obtained by incorporating some or all of the problem's constraints, such that the subspace $F$ includes the set of all solutions $S$.
The initial state is then prepared as a uniform superposition over all binary strings in $F$ as 
\begin{eqnarray}
\ket{\psi_{F}^{(0)}}_{\mathbf{q}_{[n]}}&=& V_{F}\ket{00 \cdots 0}_{\mathbf{q}_{[n]}}\nonumber \\
&=&\frac{1}{\sqrt{|F|}}\sum_{x \in F}\ket{x}_{\mathbf{q}_{[n]}},
\end{eqnarray}
where $\mathbf{q}_{[n]}=q_{1}q_{2} \cdots q_{n}$ and $V_{F}$ is a unitary operator that generates the desired superposition.
For the full search space $F=\{0, 1\}^{n}$, $V_{F}$ reduces to $H_{q_{1}} \otimes H_{q_{2}} \otimes \cdots \otimes H_{q_{n}}$, where $H$ is the Hadamard operator.
We define the solution and non-solution states as
\begin{equation}
\ket{\phi_{s}}_{\mathbf{q}_{[n]}}=\frac{1}{\sqrt{|S|}}\sum_{x\in S}\ket{x}_{\mathbf{q}_{[n]}} 
\end{equation}
and
\begin{equation}
\ket{\phi_{ns, F}}_{\mathbf{q}_{[n]}}=\frac{1}{\sqrt{|F|-|S|}}\sum_{x \in F\setminus S}\ket{x}_{\mathbf{q}_{[n]}},
\end{equation}
respectively.
Then, the initial state can be expressed as a linear combination of these two orthonormal components as
\begin{equation}
\ket{\psi_{F}^{(0)}}_{\mathbf{q}_{[n]}}=\sin \theta \ket{\phi_{s}}_{\mathbf{q}_{[n]}} + \cos \theta \ket{\phi_{ns, F}}_{\mathbf{q}_{[n]}},
\end{equation}
where $\theta=\arcsin(\sqrt{|S|/|F|})$.

The overall procedure of Grover's algorithm consists of preparing the initial state, repeatedly applying the oracle and diffusion operators for an appropriate number of iterations, and finally measuring the state in the computational basis, as illustrated in Figure~\ref{Figure01}.
The oracle operator $O_{f}$ marks the solutions by flipping their phase:
\begin{equation}
O_{f} \ket{x}_{\mathbf{q}_{[n]}}=(-1)^{f(x)}\ket{x}_{\mathbf{q}_{[n]}}.
\end{equation}
The diffusion operator is defined as a reflection about the initial state:
\begin{equation}
R_{F} = 2\ket{\psi_{F}^{(0)}}\bra{\psi_{F}^{(0)}}_{\mathbf{q}_{[n]}} - \mathbf{I}_{\mathbf{q}_{[n]}},
\end{equation}
where $\mathbf{I}$ is the identity operator.
We refer to the process of applying the oracle operator followed by the diffusion operator as a query.
After performing $\kappa$ queries starting from the initial state, the quantum state becomes
\begin{align}
\ket{\psi_{F}^{(\kappa)}}_{\mathbf{q}_{[n]}}
=&\left(R_{F} \cdot O_{f}\right)^{\kappa}\ket{\psi_{F}^{(0)}}_{\mathbf{q}_{[n]}} \nonumber \\
=& \sin ((2\kappa+1)\theta) \ket{\phi_{s}}_{\mathbf{q}_{[n]}}+ \cos ((2\kappa+1)\theta) \ket{\phi_{ns,F}}_{\mathbf{q}_{[n]}}.
\end{align}
To maximize the amplitude of the solution state $\ket{\phi_{s}}$, $\kappa$ is chosen such that $(2\kappa+1)\theta \approx \pi/2$.
As $\theta=\arcsin(\sqrt{|S|/|F|})$, we define the optimal number of queries as
\begin{equation}
\label{eq:optimal_query}
\kappa_{F}^{opt} = \operatorname{round}\left(\frac{\pi}{4\arcsin(\sqrt{|S|/|F|})} - \frac{1}{2}\right).
\end{equation}
In other words, after $\kappa_{F}^{opt}$ queries, measuring the final state in the computational basis yields a solution with high probability.

\begin{figure}[t]
\centering
\includegraphics[width=0.47\textwidth]{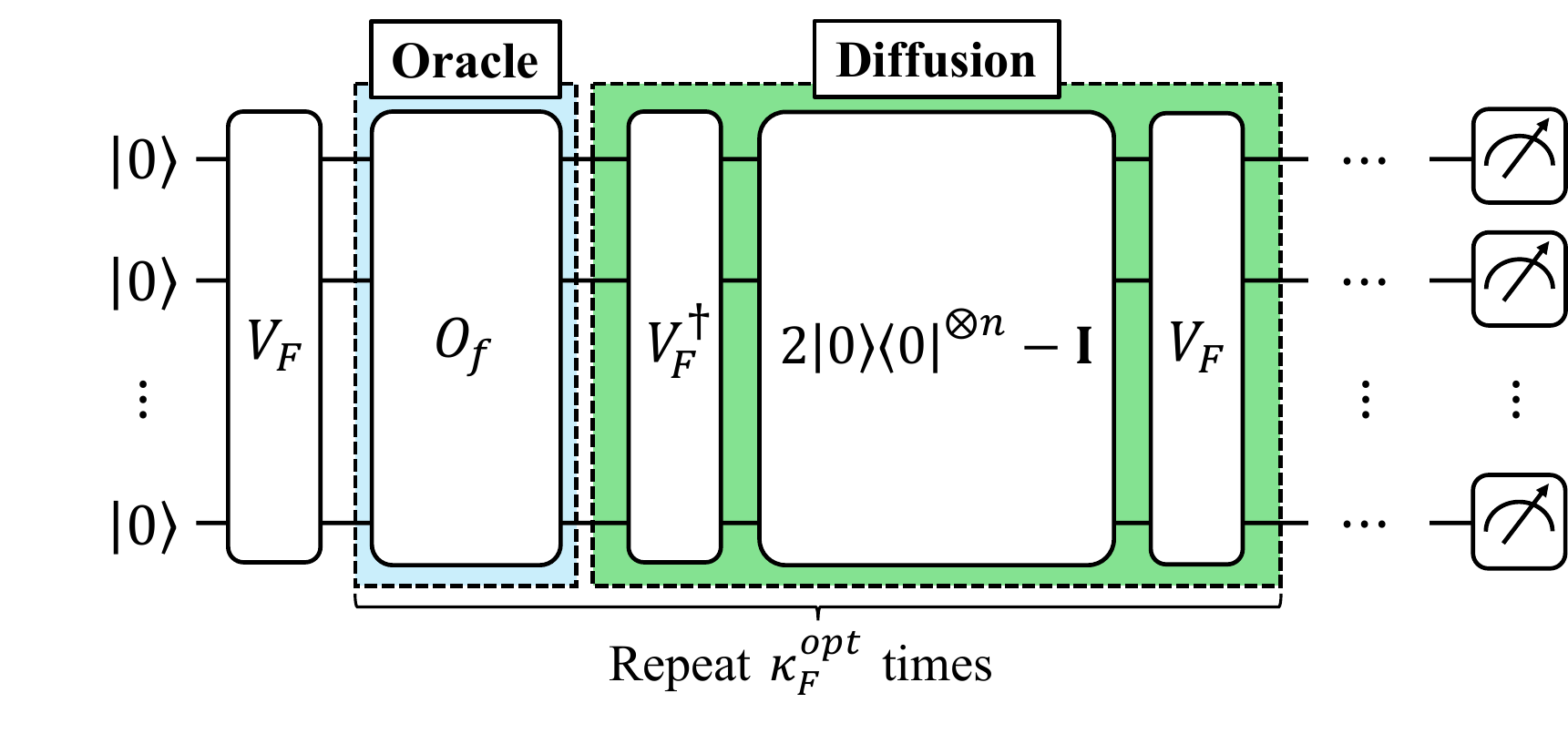}
\caption{
Schematic of Grover's algorithm with constraint-aware initialization.
The initial state $\ket{\psi_{F}^{(0)}}=V_{F}\ket{0}^{\otimes n}$ is a uniform superposition over the search subspace $F$, which may encode some or all of the problem's constraints.
Each query consists of applying the oracle operator $O_{f}$ followed by the diffusion operator $R_{F}=V_{F}(2\ket{0}\bra{0}^{\otimes n} - \mathbf{I})V_{F}^{\dagger}$.
After $\kappa_{F}^{opt}$ queries, a solution can be obtained with high probability upon measurement.}
\label{Figure01}
\end{figure}

\subsection{Constraint-aware initialization for cardinality constraints}
\label{subsec:methods02}

We first consider a class of linear constraints frequently encountered in combinatorial problems, known as cardinality constraints.
Such constraints restrict the Hamming weight (the number of ones) of a subset of binary variables and can be written as
\begin{equation}
\label{eq:cardinality_constraint}
\sum_{i \in C_{j}}x_{i}=b_{j}, \quad j=1, \dots, m,
\end{equation}
where $C_{j} \subseteq \{1, \dots, n\}$ denotes the subset of variable indices associated with the $j$-th constraint, $x_{i} \in \{0, 1\}$, and $b_{j} \in \mathbb{N}$.
This simple yet important form arises in a variety of combinatorial problems, including the exact-cover problem~\cite{Vikstaal2020, Bengtsson2020, Jiang2023} and the traveling salesman problem~\cite{Bartschi2020, Sato2023, Sato2025}.

In the standard Grover's algorithm, such constraints are typically handled by the oracle, while the initial state is prepared as a uniform superposition over all $2^{n}$ binary strings.
Here, we instead incorporate some of the constraints in Eq.~(\ref{eq:cardinality_constraint}) directly into the initial state, so that the search space is reduced to the set of basis states satisfying those constraints.
We describe how this constraint-aware initialization can be realized using structured quantum states.

To illustrate the idea, let us examine a single cardinality constraint, which requires that the sum of the binary variables in a subset $C\subseteq \{1, \dots, n\}$ equals $b$.
This constraint enforces that exactly $b$ of the variables indexed by $C$ are set to one.
A natural way to reflect this constraint in the initial state is to prepare a Dicke state $\ket{D^{|C|}_{b}}$ on the corresponding qubits, where $|C|$ is the size of the subset $C$.
A Dicke state $\ket{D_{\nu}^{\mu}}$ is defined as the equal superposition of all computational basis states of length $\mu$ with Hamming weight exactly $\nu$:
\begin{equation}
\label{eq:Dicke}
\ket{D_{\nu}^{\mu}}
= \frac{1}{\sqrt{\binom{\mu}{\nu}}}\sum_{\substack{x \in \{0, 1\}^{\mu} \\ \mathrm{wt}(x)=\nu}}\ket{x},
\end{equation}
where $\mathrm{wt}(x)$ is the Hamming weight of the binary string $x$.
Dicke states can be prepared deterministically without ancillary qubits using the method proposed by B\"{a}rtschi and Eidenbenz~\cite{Bartschi2019}, which requires $O(\mu\nu)$ gates and $O(\mu)$ circuit depth.
This indicates that employing Dicke states for constraint-aware initialization can be achieved with reasonable circuit resources.

Given a set of cardinality constraints of the form in Eq.~(\ref{eq:cardinality_constraint}), one can in principle construct the initial state by preparing independent Dicke states on disjoint subsets of qubits corresponding to those constraints, while applying Hadamard gates to the remaining qubits.
If a collection of disjoint subsets $\{C'_{1}, \dots, C'_{t}\}$ can be identified, the resulting initial state can be written, after an appropriate reordering of qubit indices, as
\begin{equation}
\label{eq:init_Dicke_multi_disj}
\ket{\psi_{F}^{(0)}}_{\mathbf{q}_{[n]}} = \ket{D_{b'_{1}}^{|C'_{1}|}}_{\mathbf{q}_{C'_{1}}} \otimes \cdots \otimes \ket{D_{b'_{t}}^{|C'_{t}|}}_{\mathbf{q}_{C'_{t}}}
\otimes H\ket{0}_{q_{\gamma_{t}+1}} \otimes \cdots \otimes  H\ket{0}_{q_{n}},
\end{equation}
where $\mathbf{q}_{\Lambda}$ represents the set of qubits associated with the indices in a subset $\Lambda$, and $\gamma_{t}=\sum_{j=1}^{t}|C'_{j}|$ is the total number of qubits involved in the preparation of Dicke states.
This construction reduces the search space size from $2^{n}$ to $2^{n-\gamma_{t}} \cdot \prod_{j=1}^{t}\binom{|C'_{j}|}{b'_{j}}$.
In practice, identifying an optimal collection of disjoint subsets $\{C'_{j}\}$ that maximizes this reduction is itself a combinatorial problem and can be nontrivial.
Nevertheless, a simple greedy procedure in Algorithm~\ref{alg:preprocessing1} can efficiently produce a feasible disjoint subset structure.
The constraints are first sorted in ascending order of $\binom{|C_j|}{b_{j}}/2^{|C_{j}|}$, so that those imposing the strongest restriction on the feasible search space are prioritized.
The algorithm then iteratively selects non-overlapping subsets, corresponding to the case where the overlap threshold is set to $\eta_{\rm{th}}=0$.
This preprocessing step requires a classical cost of $O(mk)$, in addition to an initial sorting cost of $O(m\log m)$, where $m$ is the number of constraints and $k$ denotes the typical subset size.

\begin{algorithm}[t]
\caption{Classical preprocessing for constraint-aware initialization with cardinality constraints}
\label{alg:preprocessing1}
\begin{algorithmic}[1]
\State \textbf{Input:} Collection of pairs $\{(C_{j}, b_{j})\}_{j=1}^{m}$; overlap threshold $\eta_{\rm{th}}$
\State \textbf{Output:} Selected subset collection $\mathcal{T}$
\State $\mathcal{T} \gets \emptyset$, $\mathcal{C}_{\text{used}} \gets \emptyset$
\State $\mathcal{L} \gets \{C_{j}\}_{j=1}^{m}$ sorted in ascending order of $\binom{|C_j|}{b_{j}}/2^{|C_{j}|}$

\vspace{4pt}
\Statex \textbf{Phase 1 (disjoint)}
\For{\textbf{each} $C_{j}$ in $\mathcal{L}$}
    \If{$C_{j} \cap \mathcal{C}_{\text{used}} = \emptyset$}
        \State $\mathcal{T} \gets \mathcal{T} \cup \{C_{j}\}$
        \State $\mathcal{C}_{\text{used}} \gets \mathcal{C}_{\text{used}} \cup C_{j}$
    \EndIf
\EndFor

\vspace{4pt}
\Statex \textbf{Phase 2 (overlap)}
\If{$\eta_{\rm{th}}>0$}
    \State $\mathcal{R} \gets \{C_{j} \in \mathcal{L}:~C_{j} \notin \mathcal{T}\}$
    \For{\textbf{each} $C_{j}$ in $\mathcal{R}$}
        \State $r \gets |C_j \cap \mathcal{C}_{\text{used}}|$
        \If{$r \le \eta_{\rm{th}}$}
            \State $R_{j} \gets C_{j} \setminus \mathcal{C}_{\text{used}}$
            \If{$|R_{j}|>0$}
                \State $\mathcal{T} \gets \mathcal{T} \cup \{R_{j}\}$
                \State $\mathcal{C}_{\text{used}} \gets \mathcal{C}_{\text{used}} \cup R_{j}$
            \EndIf
        \EndIf
    \EndFor
\EndIf    
\State \Return $\mathcal{T}$
\end{algorithmic}
\end{algorithm}

The above construction considers only the case where the selected subsets are mutually disjoint.
In many practical settings, however, constraints may share a number of common variables.
To handle such cases, the greedy preprocessing can be extended to allow small overlaps among subsets.
As described in the second phase of Algorithm~\ref{alg:preprocessing1}, subsets whose overlap with the already selected ones is below a chosen threshold $\eta_{\rm{th}}$ are sequentially included after removing the shared elements.
This additional phase involves only local overlap checks between subsets and thus requires the same classical cost of $O(mk)$ as in the disjoint case.
Here, we denote by $\mathcal{C}_{\text{used}}$ the set of variable indices that have already been included in previously selected subsets.
For a constraint subset $C_{j}$ overlapping with $\mathcal{C}_{\text{used}}$ by $r_{j}=|C_{j} \cap \mathcal{C}_{\text{used}}| \le \eta_{\rm{th}}$, we define the reduced set $R_{j}=C_{j}\setminus \mathcal{C}_{\text{used}}$ and relax the corresponding cardinality constraint to
\begin{equation}
\label{eq:relaxed_constraint}
\max(0, b_{j}-r_{j}) \le \sum_{i \in R_{j}}x_{i} \le b_{j}.
\end{equation}
This relaxed form defines a valid subspace over the remaining variables without conflicting with the original constraint.
The corresponding quantum state is a superposition of Dicke states over the allowed Hamming weights,
\begin{equation}
\label{eq:relaxed_Dicke}
\ket{D_{b_{j}, r_{j}}^{|R_{j}|}}=\frac{1}{\sqrt{\Omega_{j}}}\sum_{\nu=b_{j}^{\rm{min}}}^{b_{j}} \sqrt{\binom{|R_{j}|}{\nu}}\ket{D_{\nu}^{|R_{j}|}},
\end{equation}
where $b_{j}^{\rm{min}}=\max(0, b_{j}-r_{j})$ and $\Omega_{j}=\sum_{\nu=b_{j}^{\rm{min}}}^{b_{j}}\binom{|R_{j}|}{\nu}$.
Preparing a linear combination of Dicke states of the form $\sum_{\nu}\alpha_{\nu}\ket{D_{\nu}^{\mu}}$ can be implemented with $O(\mu^{2})$ gates and $O(\mu)$ depth~\cite{Bartschi2019}.

In summary, after preprocessing the constraints and reordering the qubit indices, the initial state can be constructed in the product form
\begin{eqnarray}
\label{eq:init_Dicke_multi_overl}
\ket{\psi_{F}^{(0)}}_{\mathbf{q}_{[n]}} &=& \ket{D_{b'_{1}}^{|C'_{1}|}}_{\mathbf{q}_{C'_{1}}} \otimes \cdots \otimes \ket{D_{b'_{t}}^{|C'_{t}|}}_{\mathbf{q}_{C'_{t}}} \nonumber \\
&&\otimes \ket{D_{b'_{t+1}, r'_{t+1}}^{|R'_{t+1}|}}_{\mathbf{q}_{R'_{t+1}}} \otimes \cdots \otimes \ket{D_{b'_{s}, r'_{s}}^{|R'_{s}|}}_{\mathbf{q}_{R'_{s}}} \nonumber \\
&&\otimes H\ket{0}_{q_{\gamma_{s}+1}} \otimes \cdots \otimes  H\ket{0}_{q_{n}},
\end{eqnarray}
where $\gamma_{s}$ is the total number of qubits involved in the preparation of Dicke states and their linear combinations.
As a result, the effective search space size is reduced from $2^{n}$ to $2^{n-\gamma_{s}} \cdot \prod_{j=1}^{t}\binom{|C'_{j}|}{b'_{j}} \cdot \prod_{j=t+1}^{s}\Omega_{j}$.
Although more general, non-product forms of initialization are conceivable, the product structure in Eq.~(\ref{eq:init_Dicke_multi_overl}) offers a practical advantage.
It provides a systematic and modular framework for embedding multiple constraints while keeping the circuit depth proportional to the largest subset size.
Consequently, this approach enables constraint-aware initialization with relatively shallow depth while encoding some of the constraints, and can reduce both the effective search space and the overall circuit-level resource requirements of Grover's algorithm, as will be demonstrated in the Results and Discussion section.

\subsection{Constraint-aware initialization for parity-type constraints}
\label{subsec:methods03}
We next examine a more general linear constraint of the form
\begin{equation}
\label{eq:general_constraint_single}
\sum_{i \in C}\alpha_{i}x_{i}=b,
\end{equation}
where $C \subseteq \{1, \dots, n\}$, $x_{i} \in \{0, 1\}$, and $\alpha_{i}, b \in \mathbb{Z}$.
Without loss of generality, we assume that the greatest common divisor of all coefficients $\{\alpha_{i}\}_{i \in C}$ and $b$ is one.
In contrast to the cardinality constraint discussed in the previous subsection, the number of ones among $\{x_{i}\}_{i \in C}$ cannot be predetermined in this case.
Thus, the method based on the Dicke state is not applicable.
To construct a feasible initialization in this setting, we instead exploit the parity structure of the coefficients.
Let $\hat{C}=\{i\in C|\alpha_{i}~\text{is odd}\}$. 
By reducing Eq.~(\ref{eq:general_constraint_single}) modulo 2, we obtain
\begin{equation}
\label{eq:general_constraint_single_reduced}
\sum_{i \in \hat{C}}x_{i} \equiv b \pmod{2},
\end{equation}
which implies that if $b$ is even (odd), then the number of ones among $\{x_{i}\}_{i \in \hat{C}}$ must also be even (odd).
This parity condition can be encoded using a GHZ-type entangled state of the form
\begin{eqnarray}
\label{eq:GHZ}
\ket{GHZ_{\mu,\nu}^{(X)}}
&=&\frac{1}{\sqrt{2}}\left(\ket{+}^{\otimes \mu} + (-1)^{\nu}\ket{-}^{\otimes \mu}\right) \nonumber \\
&=&\frac{1}{\sqrt{2^{\mu-1}}}\sum_{\substack{x \in \{0, 1\}^{\mu} \\ \mathrm{wt}(x)\equiv \nu \pmod{2}}}\ket{x},
\end{eqnarray}
where $\ket{\pm}=\frac{1}{\sqrt{2}}(\ket{0}\pm\ket{1})$, $\mu \in \mathbb{N}$, and $\nu \in \{0, 1\}$.
In the present setting, we set $\mu=|\hat{C}|$ and take $\nu=b \pmod{2}$, reflecting the parity condition in Eq.~(\ref{eq:general_constraint_single_reduced}).

Now suppose that a set of linear constraints of the form
\begin{equation}
\label{eq:general_constraint_multiple}
\sum_{i \in C_{j}}\alpha_{j,i}x_{i}=b_{j}, \quad j=1, \dots, m,
\end{equation}
is given, where, for each $j$, the coefficients $\{\alpha_{j,i}\}$ and $b_{j}$ have the greatest common divisor one.
Each constraint induces a parity condition on the subset $\hat{C}_{j}=\{i\in C_{j}|\alpha_{j,i}~\text{is odd}\}$.
If one can identify disjoint subsets $\{\hat{C}'_{1}, \dots, \hat{C}'_{t}\}$, then, after reordering the qubit indices, the initial state can be written as
\begin{eqnarray}
\label{eq:init_GHZ_multi_disj}
\ket{\psi_{F}^{(0)}}_{\mathbf{q}_{[n]}} &=& \ket{GHZ_{|\hat{C}'_{1}|, b'_{1}}^{(X)}}_{\mathbf{q}_{\hat{C}'_{1}}} \otimes \cdots \otimes \ket{GHZ_{|\hat{C}'_{t}|, b'_{t}}^{(X)}}_{\mathbf{q}_{\hat{C}'_{t}}} \nonumber \\
&& \otimes H\ket{0}_{q_{\hat{\gamma}_{t}+1}} \otimes \cdots \otimes  H\ket{0}_{q_{n}},
\end{eqnarray}
where $\hat{\gamma}_{t}=\sum_{j=1}^{t}|\hat{C}'_{j}|$ denotes the total number of qubits involved in the parity encoding.
Since each constraint effectively reduces the search space by half, the resulting search space size becomes $2^{n-t}$, independent of the subset sizes $|\hat{C}_{j}|$.
To maximize the overall reduction, it is therefore desirable to identify as many disjoint subsets as possible.
To this end, we employ a simple greedy preprocessing procedure that iteratively selects disjoint parity subsets in ascending order of their sizes, as described in Algorithm~\ref{alg:preprocessing2}.

\begin{algorithm}[t]
\caption{Classical preprocessing for constraint-aware initialization with general linear constraints}
\label{alg:preprocessing2}
\begin{algorithmic}[1]
\State \textbf{Input:} Collection of triples $\{(C_{j}, \{\alpha_{j,i}\}_{i \in C_{j}},b_{j})\}_{j=1}^{m}$
\State \textbf{Output:} Selected subset collection $\mathcal{T}$
\State $\mathcal{T} \gets \emptyset$, $\mathcal{C}_{\text{used}} \gets \emptyset$
\For{$j=1, \dots, m$}
    \State $\hat{C}_{j} \leftarrow \{\,i \in C_{j}: \alpha_{j, i} \text{ is odd}\,\}$
\EndFor
\State $\mathcal{L} \gets \{\hat{C}_{j}\}_{j=1}^{m}$ sorted in ascending order of $|\hat{C}_{j}|$
\vspace{4pt}
\Statex \textbf{(Disjoint subset selection)}
\For{\textbf{each} $\hat{C}_{j}$ in $\mathcal{L}$}
    \If{$\hat{C}_{j} \cap \mathcal{C}_{\text{used}} = \emptyset$}
        \State $\mathcal{T} \gets \mathcal{T} \cup \{\hat{C}_{j}\}$
        \State $\mathcal{C}_{\text{used}} \gets \mathcal{C}_{\text{used}} \cup \hat{C}_{j}$
    \EndIf
\EndFor

\State \Return $\mathcal{T}$
\end{algorithmic}
\end{algorithm}

When both cardinality and parity-type constraints are present, the initialization can be systematically constructed by combining Dicke states and GHZ-type states.
Since encoding a cardinality constraint on a subset $C_{j}$ via Dicke state preparation reduces the search space by a factor of $\binom{|C_{j}|}{b_{j}}/2^{|C_{j}|}$, whereas each parity-type constraint reduces it by exactly one half under the proposed initialization scheme, it can be advantageous in terms of search space reduction to prioritize subsets associated with cardinality constraints before applying the parity-based selection to the remaining constraints.
This ordering enables an efficient reduction of the overall search space within a systematic framework.

\section{Results and Discussion}
\label{sec:results}

\subsection{Resource requirement analysis}
\label{subsec:results01}

We evaluate the overall resource requirements associated with the constraint-aware initialization considered in this work.
In particular, we examine how structural features of the imposed constraints, such as the number of constraints incorporated into the initialization and the size of the constraint sets, affect the resulting resource requirements.
The resource metrics considered here include the total gate count, the circuit depth, and the number of two-qubit gates.
The total gate count serves as a measure of circuit complexity, the circuit depth reflects the computation time, and the number of two-qubit gates provides a practical indicator of error sensitivity in experimental implementations.

Throughout this resource analysis, we focus on two representative cases, namely the states $\ket{D_{1}^{\mu}}$ and $\ket{GHZ_{\mu,\nu}^{(X)}}$, which encode cardinality and parity-type constraints, respectively.
Although our quantitative analysis centers on these representative examples, the underlying arguments rely only on generic features such as the reduction of the effective search space and the overhead associated with state preparation.
As a result, while the explicit form of the efficiency condition may depend on the specific structure of the initial state preparation, we expect that similar criteria and analytical frameworks can be employed to assess the relative efficiency of more general forms of constraint-aware initialization.

For an initialization strategy $\sigma$, let $\mathcal{S}_{\sigma}$ denote the cost of preparing the initial state, and $\mathcal{O}_{\sigma}$ the cost of implementing the oracle. 
Each strategy $\sigma$ induces a search space $F_{\sigma}$, and the corresponding optimal number of queries $\kappa_{F_{\sigma}}^{opt}$ defined in Eq.~(\ref{eq:optimal_query}) is abbreviated as $\kappa_{\sigma}$.
Let $\mathcal{D}$ denote the cost of implementing the unitary $2\ket{0}\bra{0}^{\otimes n} - \mathbf{I}$, which appears in the diffusion operator and is independent of the choice of initialization strategy.
Assuming that implementing a unitary and its inverse incurs the same cost, the total resource requirement can be expressed as
\begin{equation}
\label{eq:eq_resource}
\mathcal{R}_{\sigma}=\mathcal{S}_{\sigma}+\left(\mathcal{O}_{\sigma}+\mathcal{D}+2\mathcal{S}_{\sigma}\right)\kappa_{\sigma},
\end{equation}
where the first term corresponds to the one-time cost of state preparation, while the second term accounts for repeated applications of the oracle and both the forward and inverse state preparation operators in each query.
In practice, the overall cost may be further reduced through optimizations such as commuting operations.
However, adopting the expression in Eq.~(\ref{eq:eq_resource}) provides a consistent and transparent basis for comparing the relative efficiency of different initialization strategies.
We say that strategy $\tau$ is more efficient than strategy $\sigma$ (with respect to a given metric) if $\mathcal{R}_{\tau} < \mathcal{R}_{\sigma}$.

To compare different initialization strategies $\sigma$ and $\tau$, we make an assumption regarding the oracle implementation.
Whenever the search spaces satisfy $|F_{\tau}|<|F_{\sigma}|$, we assume that the corresponding oracle costs obey $\mathcal{O}_{\tau} < \mathcal{O}_{\sigma}$.
This assumption reflects the natural premise that an oracle typically evaluates a function over all elements within its designated domain, and therefore a smaller search space is expected to require a lower implementation cost.
Under this assumption, whenever
\begin{equation}
\label{eq:modified_condition}
\mathcal{S}_{\tau}+\left(\mathcal{O}_{\sigma}+\mathcal{D}+2\mathcal{S}_{\tau}\right)\kappa_{\tau} < \mathcal{R}_{\sigma},
\end{equation}
the inequality $\mathcal{R}_{\tau} < \mathcal{R}_{\sigma}$ follows.
The inequality in Eq.~(\ref{eq:modified_condition}) can be rearranged to yield an explicit lower bound on the combined cost of implementing $\mathcal{O}_{\sigma}$ and $\mathcal{D}$:
\begin{equation}
\label{eq:ineq_resource}
\mathcal{O}_{\sigma} + \mathcal{D} > \left(\frac{2\kappa_{\tau}+1}{\kappa_{\sigma}-\kappa_{\tau}}\right)\left(\mathcal{S}_{\tau}-\mathcal{S}_{\sigma}\right)-2\mathcal{S}_{\sigma}.
\end{equation}
The criterion in Eq.~(\ref{eq:ineq_resource}) will be used to analyze the resource requirements of the proposed initialization schemes.

\subsubsection{Cardinality constraints}
\label{subsec:results01-01}

Let us first consider the case of cardinality constraints and analyze how the incremental incorporation of disjoint constraint sets influences the overall resource requirement.
To this end, we introduce a sequence of strategies $\sigma_{i}$, where $\sigma_{0}$ corresponds to the standard uniform initialization without constraints, and $\sigma_{i}$ denotes the strategy obtained by incorporating $i$ disjoint constraint sets into the initial state preparation.
Each successive strategy $\sigma_{i+1}$ is constructed by adding one additional disjoint constraint set of size $\mu_{i+1}$ to $\sigma_{i}$, and we derive conditions under which $\sigma_{i+1}$ yields a lower resource cost than $\sigma_{i}$.
The following proposition formalizes the comparison for the case in which the newly added constraint set is prepared as a Dicke state $\ket{D_{1}^{\mu_{i+1}}}$.

\begin{Prop}
\label{result:prop1}
Suppose that $|F_{\sigma_i}| \ge 64|S|$, where $|S|$ is the number of solutions, and that the additional disjoint constraint set incorporated in $\sigma_{i+1}$ is represented by $\ket{D_{1}^{\mu}}$, where we simply denote $\mu_{i+1}$ by $\mu\ge 2$.
If
\begin{equation}
\label{eq:prop1_01}
\mathcal{O}_{\sigma_{i}} + \mathcal{D} >\left(\frac{24\pi}{21\sqrt{\frac{2^{\mu}}{\mu}}-8\pi}\right)\left(\mathcal{S}_{\sigma_{i+1}}-\mathcal{S}_{\sigma_i}\right)-2\mathcal{S}_{\sigma_i},
\end{equation}
then strategy $\sigma_{i+1}$ is more efficient than strategy $\sigma_i$.
In particular, when the Dicke state $\ket{D_{1}^{\mu}}$ is prepared using the circuit shown in Figure~\ref{Figure02}(a), the sufficient condition simplifies to
\begin{equation}
\label{eq:prop1_02}
\mathcal{O}_{\sigma_{i}} + \mathcal{D} \ge 67,
\end{equation}
which guarantees that strategy $\sigma_{i+1}$ is more efficient than strategy $\sigma_i$.
\end{Prop}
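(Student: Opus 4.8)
The plan is to apply the reduction of Eq.~\eqref{eq:ineq_resource} with $\sigma=\sigma_i$ and $\tau=\sigma_{i+1}$, turning the claim into a single inequality on iteration counts. First I would record two elementary facts. Replacing the $\mu$ Hadamards on the relevant qubits by any preparation of $\ket{D_{1}^{\mu}}$ cannot lower the state‑preparation cost in any of the metrics, so $\mathcal{S}_{\sigma_{i+1}}-\mathcal{S}_{\sigma_i}\ge0$; and since $\mu\ge2$ forces $\sqrt{2^{\mu}/\mu}\ge\sqrt2$, the coefficient $24\pi/(21\sqrt{2^{\mu}/\mu}-8\pi)$ is positive. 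Hence Eq.~\eqref{eq:prop1_01} implies the condition in Eq.~\eqref{eq:ineq_resource}, and therefore $\mathcal{R}_{\sigma_{i+1}}<\mathcal{R}_{\sigma_i}$, provided that
\begin{equation}
\label{eq:plan_ratio}
\frac{2\kappa_{\sigma_{i+1}}+1}{\kappa_{\sigma_i}-\kappa_{\sigma_{i+1}}}\le\frac{24\pi}{21\sqrt{2^{\mu}/\mu}-8\pi}
\end{equation}
and $\kappa_{\sigma_i}>\kappa_{\sigma_{i+1}}$, the latter being needed so that the rearrangement producing Eq.~\eqref{eq:ineq_resource} is legitimate. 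So the whole task is to prove \eqref{eq:plan_ratio}.

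For this I would set $M=\sqrt{2^{\mu}/\mu}$ and $N_i=\sqrt{|F_{\sigma_i}|/|S|}$. The hypothesis $|F_{\sigma_i}|\ge64|S|$ is $N_i\ge8$; incorporating $\ket{D_{1}^{\mu}}$ shrinks the search space by the factor $\mu/2^{\mu}$, so $\sin\theta_{\sigma_{i+1}}=M\sin\theta_{\sigma_i}=M/N_i$, and $S\subseteq F_{\sigma_{i+1}}$ forces $M\le N_i$; note also $|F_{\sigma_{i+1}}|<|F_{\sigma_i}|$, so the hypothesis on oracle costs underlying Eq.~\eqref{eq:ineq_resource} is in force. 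From $|\operatorname{round}(x)-x|\le\tfrac12$ I get $\kappa_{\sigma_i}\ge A-1$ and $\kappa_{\sigma_{i+1}}\le A'$, where $A=\pi/\big(4\arcsin(1/N_i)\big)$ and $A'=\pi/\big(4\arcsin(M/N_i)\big)$. Convexity of $\arcsin$ on $[0,1]$ (with $\arcsin0=0$) then supplies the two estimates that do the work: $\arcsin(Mx)\ge M\arcsin(x)$, which gives $A'\le A/M$; and, since $N_i\ge8$ makes $\arcsin(1/N_i)\le\arcsin(1/8)<\pi/24$, the bound $\arcsin(1/N_i)<\pi/(3N_i)$, which gives $A>\tfrac34N_i$. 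In particular $\kappa_{\sigma_i}-\kappa_{\sigma_{i+1}}\ge A(1-1/M)-1>0$, so the validity assumption above holds, and $\kappa_{\sigma_{i+1}}$ is a nonnegative integer bounded by $A/M$.

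The remaining estimate I would finish with a short case split, since a purely continuous bound is too weak by a thin margin at the extremes of the admissible range. If $\kappa_{\sigma_{i+1}}=0$, the left side of \eqref{eq:plan_ratio} is $1/\kappa_{\sigma_i}$, and $\kappa_{\sigma_i}\ge\max\!\big(5,\tfrac34M-1\big)$ (using $A>\tfrac34N_i$, $N_i\ge\max(8,M)$, and integrality); substituting this reduces \eqref{eq:plan_ratio} to an elementary inequality in $M$. If $\kappa_{\sigma_{i+1}}\ge1$, then $\theta_{\sigma_{i+1}}\le\pi/4$, so $M/N_i=\sin\theta_{\sigma_{i+1}}\le1/\sqrt2$, i.e.\ $N_i\ge M\sqrt2$; minimizing $A$ over $N_i\ge\max(8,M\sqrt2)$ via the same $\arcsin$ estimates and inserting into $\frac{2A/M+1}{A-1-A/M}=\frac{2A+M}{A(M-1)-M}$ — which is decreasing in $A$ — again reduces \eqref{eq:plan_ratio} to a one‑variable rational inequality in $M$ that holds for all $M\ge\sqrt2$. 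I expect this to be the main obstacle: because the iteration counts are integers and \eqref{eq:plan_ratio} is essentially tight, one cannot avoid using both integrality and the feasibility bound $M\le N_i$, and the constants $24\pi$, $21$, $8\pi$ are precisely those for which the resulting quadratic‑type inequalities hold throughout.

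Finally, for Eq.~\eqref{eq:prop1_02} I would substitute the gate count (respectively circuit depth, two‑qubit‑gate count) of the Figure~\ref{Figure02}(a) circuit for $\ket{D_{1}^{\mu}}$, which makes $\mathcal{S}_{\sigma_{i+1}}-\mathcal{S}_{\sigma_i}$ an explicit quantity, linear in $\mu$. Since $-2\mathcal{S}_{\sigma_i}\le0$, the right‑hand side of Eq.~\eqref{eq:prop1_01} is at most $24\pi\,(\mathcal{S}_{\sigma_{i+1}}-\mathcal{S}_{\sigma_i})/(21\sqrt{2^{\mu}/\mu}-8\pi)$, whose denominator grows like $2^{\mu/2}$ while the numerator grows only linearly in $\mu$; hence its supremum over integers $\mu\ge2$ is attained at one of the first few values of $\mu$, and a direct evaluation shows it is below $67$. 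Therefore $\mathcal{O}_{\sigma_i}+\mathcal{D}\ge67$ implies Eq.~\eqref{eq:prop1_01}, and with it $\mathcal{R}_{\sigma_{i+1}}<\mathcal{R}_{\sigma_i}$.
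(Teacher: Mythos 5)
Your proposal is correct, and it shares the paper's overall skeleton: both arguments reduce the claim, via the criterion in Eq.~(\ref{eq:ineq_resource}) applied with $\sigma=\sigma_i$ and $\tau=\sigma_{i+1}$, to the single ratio bound $(2\kappa_{\sigma_{i+1}}+1)/(\kappa_{\sigma_i}-\kappa_{\sigma_{i+1}})\le 24\pi/(21\sqrt{2^{\mu}/\mu}-8\pi)$, and both then obtain Eq.~(\ref{eq:prop1_02}) by bounding $\mathcal{S}_{\sigma_{i+1}}-\mathcal{S}_{\sigma_i}$ linearly in $\mu$ for the Figure~\ref{Figure02}(a) circuit (the paper uses $\mathcal{S}_{\sigma_{i+1}}-\mathcal{S}_{\sigma_i}\le 2\mu$, giving $48\pi\mu/(21\sqrt{2^{\mu}/\mu}-8\pi)$, which is decreasing in $\mu$ and equals $\approx 66.1$ at $\mu=2$). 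Where you genuinely diverge is in how the ratio bound is proved. The paper does it in one two-step chain: $x\le\arcsin x$ at $x=\sqrt{|S|/|F_{\sigma_{i+1}}|}$ to control the numerator and $\kappa_{\sigma_{i+1}}$, then $\arcsin x\le 8\pi x/25$ for $x\le 0.17$ (available because $|F_{\sigma_i}|\ge 64|S|$) to lower-bound $\kappa_{\sigma_i}$, absorbing the additive rounding constants along the way. You instead set $A=\pi/(4\arcsin(1/N_i))$ and $A'=\pi/(4\arcsin(M/N_i))$, use convexity of $\arcsin$ to get $A'\le A/M$ and $A>\tfrac34 N_i$, and close with a case split on $\kappa_{\sigma_{i+1}}=0$ versus $\kappa_{\sigma_{i+1}}\ge 1$ that exploits integrality and the feasibility bounds $M\le N_i$ and $N_i\ge M\sqrt{2}$. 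Your route is longer but more scrupulous: it makes explicit the sign condition $\mathcal{S}_{\sigma_{i+1}}\ge\mathcal{S}_{\sigma_i}$ and the positivity of $\kappa_{\sigma_i}-\kappa_{\sigma_{i+1}}$ needed for the rearrangement to be legitimate, and it tracks the ``$+1$'' and round-off terms that the paper's chain handles only implicitly. The price is that the deferred one-variable inequalities are razor-thin at the boundary --- the $\kappa_{\sigma_{i+1}}=0$ branch requires $M\ge 16\pi/(18\pi-21)\approx 1.4140$, which $M=\sqrt{2}\approx 1.4142$ clears by less than $10^{-3}$ --- so those numerical verifications are not optional; I checked them and they do hold for all $M\ge\sqrt{2}$.
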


\begin{figure*}[t]
\centering
\includegraphics[width=0.7\textwidth]{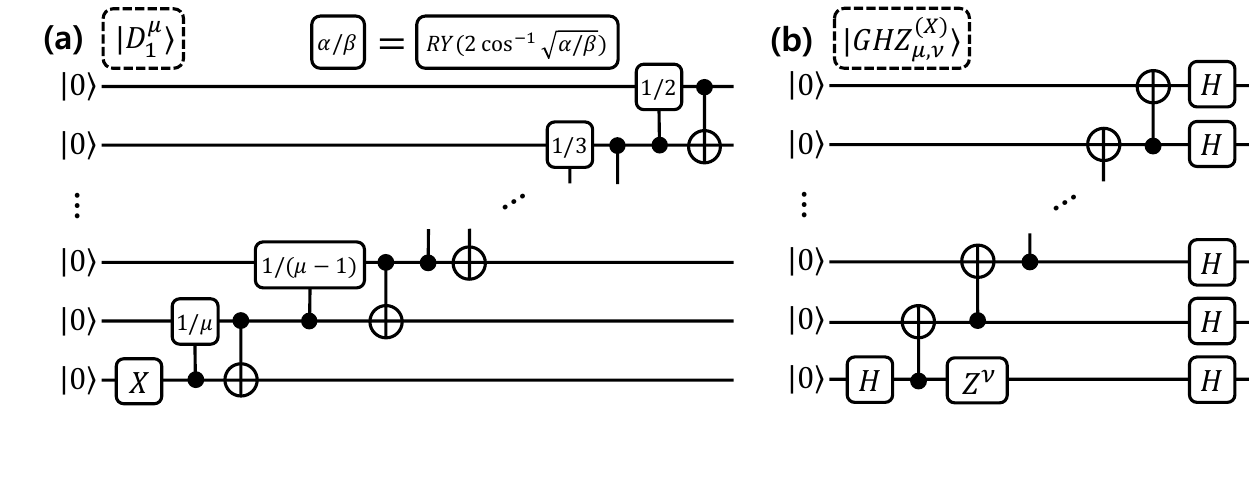}
\caption{
(a) Quantum circuit for preparing $\ket{D_{1}^{\mu}}$ on $\mu$ qubits.
(b) Quantum circuit for preparing $\ket{GHZ_{\mu,\nu}^{(X)}}$ on $\mu$ qubits.
From the circuits, the total gate count, the two-qubit gate count, and the circuit depth are $2\mu-1$, $2\mu-2$, and $2\mu-1$ for (a), and $2\mu+1$, $\mu-1$, and $\mu+1$ for (b), respectively.
}
\label{Figure02}
\end{figure*}

\begin{proof}
For $\mu \ge 2$, adding a disjoint constraint set prepared as $\ket{D_{1}^{\mu}}$ implies 
\begin{equation}
\frac{|F_{\sigma_{i}}|}{|F_{\sigma_{i+1}}|}=\frac{2^{\mu}}{\mu} \ge 2,
\end{equation}
and hence $\mathcal{O}_{\sigma_{i+1}} < \mathcal{O}_{\sigma_i}$ under our oracle cost assumption.
Suppose that $|F_{\sigma_i}| \ge 64|S|$.
Then we have
\begin{eqnarray}
\label{eq:relaxed_ineq2}
\frac{2\kappa_{\sigma_{i+1}}+1}{\kappa_{\sigma_i}-\kappa_{\sigma_{i+1}}} 
&\le& \frac{3\pi\sqrt{\frac{|F_{\sigma_i}|}{|S|} \cdot \frac{\mu}{2^{\mu}}}}{4\kappa_{\sigma_i}-\pi\sqrt{\frac{|F_{\sigma_i}|}{|S|} \cdot \frac{\mu}{2^{\mu}}}} \nonumber \\
&\le& \frac{24\pi}{21\sqrt{\frac{2^{\mu}}{\mu}}-8\pi},
\end{eqnarray}
where the first inequality follows from $x \le \arcsin x$ for $0 \le x \le 1$, applied to $x=\sqrt{|S|/|F_{\sigma_{i+1}}|}$. 
Since the assumption $|F_{\sigma_i}| \ge 64|S|$ implies $\sqrt{|S|/|F_{\sigma_i}|} \le 1/8 < 0.17$, the second inequality follows from the bound $\arcsin x \le 8\pi x/25$ for $0 \le x \le 0.17$.
We also note that the conditions $\mu \ge 2$ and $|F_{\sigma_i}| \ge 64|S|$ ensure that all denominators appearing above are strictly positive.
Combining this bound with the previously derived resource comparison criterion yields a simpler sufficient condition in Eq.~(\ref{eq:prop1_01}).

Under the specific implementation where the Dicke state $\ket{D_{1}^{\mu}}$ is prepared using the circuit shown in Figure~\ref{Figure02}(a), the additional cost incurred by incorporating one more constraint set satisfies
\begin{equation}
\mathcal{S}_{\sigma_{i+1}} - \mathcal{S}_{\sigma_{i}} \le 2\mu,
\end{equation}
for each of the considered resource metrics, such as the total gate count, the circuit depth, and the number of two-qubit gates.
Therefore, we obtain a sufficient condition for this specific implementation,
\begin{equation}
\mathcal{O}_{\sigma_{i}} + \mathcal{D} \ge \frac{48\pi\mu}{21\sqrt{\frac{2^{\mu}}{\mu}}-8\pi}, 
\end{equation}
where we omit the negative term $-2\mathcal{S}_{\sigma_i}$.
It can be shown that the right-hand side decreases monotonically for $\mu \ge 2$, and its value at $\mu=2$ is approximately $66.1$. 
This completes the proof.

\end{proof}

An interesting observation is that the cost $\mathcal{D}$ of implementing $2\ket{0}\bra{0}^{\otimes n} - \mathbf{I}$ alone is typically sufficient to satisfy the inequality in Eq.~(\ref{eq:prop1_02}), even without taking the oracle construction into account.
In practice, $\mathcal{D}$ is dominated by the implementation of the multi-controlled $Z$ gate (which can be expressed using Hadamard gates and a multi-controlled $X$ gate). 
Decomposing this component using a standard ancilla-assisted construction~\cite{Nielsen2010} yields gate counts and circuit depths that readily exceed the threshold value of 67 even for a relatively small number of qubits.
The same conclusion holds for ancilla-free decompositions, which generally require substantially higher resource costs~\cite{Maslov2016}. 
Consequently, for cardinality constraints, incorporating additional disjoint constraints generically leads to improved overall resource efficiency.
Notably, even when only a single constraint set is used, the resulting initialization strategy can outperform the standard approach, highlighting that effectively exploiting readily available constraints can be practically meaningful without exhaustive constraint coverage.

We next study how the size of an additional constraint set affects resource efficiency with $\ket{D_{1}^{\mu}}$.
In particular, we show that incorporating a larger constraint set can further reduce the total resource cost, as formalized in the following proposition.

\begin{Prop}
\label{result:prop2}
Consider two strategies $\sigma$ and $\sigma'$ that differ only in the size of the additional constraint set, which is represented by $\ket{D_{1}^{\mu}}$ for $\sigma$ and $\ket{D_{1}^{\mu+1}}$ for $\sigma'$, where $\mu \ge 2$.
Suppose that $|F_{\sigma}| \ge 100|S|$, where $|S|$ denotes the number of solutions.
If
\begin{equation}
\label{eq:prop2_01}
\mathcal{O}_{\sigma} + \mathcal{D} > \left(\frac{120\pi}{109\sqrt{\frac{|2\mu|}{|\mu+1|}}-40\pi}\right)\left(\mathcal{S}_{\sigma'}-\mathcal{S}_{\sigma}\right)-2\mathcal{S}_{\sigma},
\end{equation}
then strategy $\sigma'$ is more efficient than strategy $\sigma$.
In particular, when the Dicke state $\ket{D_{1}^{\mu}}$ is prepared using the circuit shown in Figure~\ref{Figure02}(a), the sufficient condition simplifies to
\begin{equation}
\label{eq:prop2_02}
\mathcal{O}_{\sigma} + \mathcal{D} \ge 97,
\end{equation}
for $\mu \ge 3$.
\end{Prop}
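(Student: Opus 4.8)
The plan is to reuse the argument of Proposition~\ref{result:prop1} almost verbatim, changing only two pieces of data: the effective search-space ratio between the two strategies, and the state-preparation overhead of enlarging the Dicke register by one qubit. First I would record the search-space reduction. Replacing $\ket{D_{1}^{\mu}}$ by $\ket{D_{1}^{\mu+1}}$ changes the contribution of that constraint block from a factor $\mu/2^{\mu}$ to $(\mu+1)/2^{\mu+1}$ (one previously free qubit is absorbed into the Dicke register), so that $|F_{\sigma}|/|F_{\sigma'}| = 2\mu/(\mu+1)$. This ratio exceeds $1$ precisely when $\mu\ge 2$, so under the stated hypothesis $|F_{\sigma'}|<|F_{\sigma}|$, and hence $\mathcal{O}_{\sigma'}<\mathcal{O}_{\sigma}$ by the oracle-cost assumption; we are therefore in the setting where the criterion in Eq.~(\ref{eq:ineq_resource}) applies with $\tau=\sigma'$.

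The core step is to bound $(2\kappa_{\sigma'}+1)/(\kappa_{\sigma}-\kappa_{\sigma'})$ from above, using exactly the estimates from the proof of Proposition~\ref{result:prop1}. Writing $a=\sqrt{|S|/|F_{\sigma}|}$, the inequality $x\le\arcsin x$ applied at $x=\sqrt{|S|/|F_{\sigma'}|}=a\sqrt{2\mu/(\mu+1)}\le 1$ controls the numerator as well as the $\kappa_{\sigma'}$ term in the denominator, while $\arcsin x\le 8\pi x/25$ — valid here because $|F_{\sigma}|\ge 100|S|$ forces $a\le 1/10<0.17$ — controls $\kappa_{\sigma}$ from below through $4\kappa_{\sigma}\ge 25/(8a)-4$. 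Clearing the common factor $a$ and bounding the residual term with $a\le 1/10$ (which replaces $\tfrac{25}{8}-\tfrac{1}{2}=\tfrac{21}{8}$ from Proposition~\ref{result:prop1} by $\tfrac{25}{8}-\tfrac{2}{5}=\tfrac{109}{40}$ in the denominator) gives $(2\kappa_{\sigma'}+1)/(\kappa_{\sigma}-\kappa_{\sigma'})\le 120\pi/(109\sqrt{2\mu/(\mu+1)}-40\pi)$; as in Proposition~\ref{result:prop1} one remarks that $\mu\ge 2$ together with $|F_{\sigma}|\ge 100|S|$ keeps every denominator strictly positive. Substituting this bound into Eq.~(\ref{eq:ineq_resource}) then yields Eq.~(\ref{eq:prop2_01}).

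For the explicit implementation, Figure~\ref{Figure02}(a) shows that the total gate count, two-qubit gate count, and depth of $\ket{D_{1}^{\mu}}$ are $2\mu-1$, $2\mu-2$, and $2\mu-1$, so enlarging to $\ket{D_{1}^{\mu+1}}$ raises each metric by at most $2$; hence $\mathcal{S}_{\sigma'}-\mathcal{S}_{\sigma}\le 2$ in every metric considered. Discarding the nonnegative term $-2\mathcal{S}_{\sigma}$ in Eq.~(\ref{eq:prop2_01}) then leaves the sufficient condition $\mathcal{O}_{\sigma}+\mathcal{D}\ge 240\pi/(109\sqrt{2\mu/(\mu+1)}-40\pi)$. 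Since $2\mu/(\mu+1)=2-2/(\mu+1)$ is increasing in $\mu$, the right-hand side is decreasing, and evaluating it at $\mu=3$ gives a value below $97$ (roughly $96.3$), which establishes Eq.~(\ref{eq:prop2_02}) for integer-valued resources.

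The work here is bookkeeping rather than conceptual, but there is one genuine subtlety that dictates the hypotheses. Unlike Proposition~\ref{result:prop1}, where $2^{\mu}/\mu\ge 2$ uniformly, the ratio $2\mu/(\mu+1)$ is close to $1$ for small $\mu$: the denominator $109\sqrt{2\mu/(\mu+1)}-40\pi$ is only marginally positive at $\mu=2$, so the constant on the right-hand side blows up there — which is exactly why the clean bound of Eq.~(\ref{eq:prop2_02}) must be restricted to $\mu\ge 3$, and why the strengthened assumption $|F_{\sigma}|\ge 100|S|$ (rather than $64|S|$) is needed to make the numbers close. One must also check that the coefficient multiplying $\mathcal{S}_{\sigma'}-\mathcal{S}_{\sigma}$ in Eq.~(\ref{eq:prop2_01}) stays positive (again $\mu\ge 2$) and, exactly as in Proposition~\ref{result:prop1}, handle the regime of very small $\kappa_{\sigma'}$ (possibly $\kappa_{\sigma'}=0$) so that the ratio bound in the core step remains valid there.
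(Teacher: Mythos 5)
Your proposal is correct and follows essentially the same route as the paper, which itself proves Proposition~\ref{result:prop2} by repeating the steps of Proposition~\ref{result:prop1} with the ratio $|F_{\sigma}|/|F_{\sigma'}|=2\mu/(\mu+1)$, the strengthened assumption $|F_{\sigma}|\ge 100|S|$ (giving the constants $120\pi$, $109$, $40\pi$ exactly as you reconstruct them), the substitution $\mathcal{S}_{\sigma'}-\mathcal{S}_{\sigma}=2$ from Figure~\ref{Figure02}(a), and evaluation of the resulting monotone bound at $\mu=3$ (about $96.3$). You in fact supply more detail than the paper's own proof, correctly pinpointing why the denominator is only marginally positive at $\mu=2$ and hence why the clean bound is restricted to $\mu\ge 3$.
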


\begin{proof}
The inequality in Eq.~(\ref{eq:prop2_01}) follows from an argument analogous to the proof of Proposition~\ref{result:prop1}, by repeating the same steps under the conditions $|F_{\sigma}| \ge 100|S|$ and
\begin{equation}
\frac{|F_{\sigma}|}{|F_{\sigma'}|}=\frac{2\mu}{\mu+1}.
\end{equation}
When the Dicke state $\ket{D_{1}^{\mu}}$ is prepared using the circuit shown in Figure~\ref{Figure02}(a), substituting $\mathcal{S}_{\sigma'} - \mathcal{S}_{\sigma} = 2$ yields a further simplified sufficient condition,
\begin{equation}
\mathcal{O}_{\sigma} + \mathcal{D} > \frac{240\pi}{109\sqrt{\frac{2\mu}{\mu+1}}-40\pi}.
\end{equation}
The right-hand side is decreasing for $\mu \ge 2$, and its value drops sharply when $\mu$ increases from 2 to 3, after which the decrease becomes more gradual.
Evaluating the bound at $\mu=3$ provides a simple and conservative sufficient condition that holds for all $\mu \ge 3$, yielding the inequality in Eq.~(\ref{eq:prop2_02}).
\end{proof}

Proposition~\ref{result:prop2}, established for the $\ket{D_{1}^{\mu}}$-based initialization, indicates that incorporating a larger constraint set can improve resource efficiency.
Accordingly, the preprocessing in Algorithm~\ref{alg:preprocessing1} allows us to increase the attainable resource advantage even in the worst-case scenario where only a single constraint set is ultimately used.
In other words, although the preprocessing strategy may not be optimal, it remains meaningful as a simple baseline that strengthens the resource advantage in a conservative setting.

\subsubsection{Parity-type constraints}
\label{subsec:results01-02}

In analogy to the discussion of cardinality constraints in the previous subsection, we examine how parity-type constraints affect the resource cost of the initialization.
A key difference is that a parity constraint implemented via a GHZ-type state $\ket{GHZ_{\mu,\nu}^{(X)}}$ always reduces the search space by half, independent of the constraint set size $\mu$.
As a result, increasing $\mu$ does not yield any further reduction of the search space, but instead incurs a higher circuit cost.
It therefore follows that, for parity-type constraints, applying constraints on smaller subsets of qubits is generally more efficient in terms of overall resource cost.

We now examine how the number of imposed parity-type constraints affects the overall resource requirements.
As before, we consider a sequence of strategies $\sigma_{i}$, where $\sigma_{0}$ denotes the standard initialization and each successive strategy incorporates one additional disjoint constraint set of size $\mu$.
Adding such a set via $\ket{GHZ_{\mu,\nu}^{(X)}}$ halves the search space, that is, $|F_{\sigma_i}|/|F_{\sigma_{i+1}}|=2$, and thus reduces the oracle cost under our assumption.
Meanwhile, when $\ket{GHZ_{\mu,\nu}^{(X)}}$ is implemented using the circuit in Figure~\ref{Figure02}(b), then we have $\mathcal{S}_{\sigma_{i+1}} - \mathcal{S}_{\sigma_{i}} \le 2\mu+1$.
Hence, by applying the same comparison argument as in Proposition~\ref{result:prop1}, we obtain the following result.

\begin{Prop}
\label{result:prop3}
Suppose that $|F_{\sigma_i}| \ge 64|S|$, where $|S|$ is the number of solutions, and that the additional disjoint constraint set incorporated in $\sigma_{i+1}$ is represented by $\ket{GHZ_{\mu,\nu}^{(X)}}$.
If
\begin{equation}
\label{eq:prop3_01}
\mathcal{O}_{\sigma_{i}} + \mathcal{D} >\left(\frac{24\pi}{21\sqrt{2}-8\pi}\right)\left(\mathcal{S}_{\sigma_{i+1}}-\mathcal{S}_{\sigma_i}\right)-2\mathcal{S}_{\sigma_i},
\end{equation}
then strategy $\sigma_{i+1}$ is more efficient than strategy $\sigma_i$.
When preparing $\ket{GHZ_{\mu,\nu}^{(X)}}$ using the circuit in Figure~\ref{Figure02}(b), we have a simplified sufficient condition:
\begin{equation}
\label{eq:prop3_02}
\mathcal{O}_{\sigma_{i}} + \mathcal{D} >\frac{24\pi(2\mu+1)}{21\sqrt{2}-8\pi}.
\end{equation}
\end{Prop}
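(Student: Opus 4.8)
The plan is to reuse, essentially verbatim, the chain of estimates in the proof of Proposition~\ref{result:prop1}, exploiting the fact that the only place the Dicke-specific ratio $2^{\mu}/\mu$ entered there was as the search-space shrinkage factor, and that for a GHZ-type block this factor is simply the constant $2$. Concretely, since $\ket{GHZ_{\mu,\nu}^{(X)}}$ is an equal superposition over the $2^{\mu-1}$ length-$\mu$ strings of fixed parity, incorporating one more disjoint parity set gives $|F_{\sigma_i}|/|F_{\sigma_{i+1}}| = 2$; in particular $|F_{\sigma_{i+1}}| < |F_{\sigma_i}|$, so by the oracle-cost assumption $\mathcal{O}_{\sigma_{i+1}} < \mathcal{O}_{\sigma_i}$, and the comparison criterion in Eq.~(\ref{eq:ineq_resource}) applies with $\sigma = \sigma_i$ and $\tau = \sigma_{i+1}$.

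The first step is to bound the prefactor $(2\kappa_{\sigma_{i+1}}+1)/(\kappa_{\sigma_i}-\kappa_{\sigma_{i+1}})$ appearing in that criterion. Writing $x = \sqrt{|S|/|F_{\sigma_{i+1}}|} = \sqrt{2|S|/|F_{\sigma_i}|}$, I would use $x \le \arcsin x$ to upper-bound $2\kappa_{\sigma_{i+1}}+1$ and, via $|F_{\sigma_i}| \ge 64|S|$ (which forces $\sqrt{|S|/|F_{\sigma_i}|} \le 1/8 < 0.17$), the bound $\arcsin x \le 8\pi x/25$ on $0 \le x \le 0.17$ to lower-bound $\kappa_{\sigma_i}$. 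This is exactly the computation behind Eq.~(\ref{eq:relaxed_ineq2}) with $\mu/2^{\mu}$ replaced by $1/2$, and it yields
\begin{equation}
\frac{2\kappa_{\sigma_{i+1}}+1}{\kappa_{\sigma_i}-\kappa_{\sigma_{i+1}}} \le \frac{24\pi}{21\sqrt{2}-8\pi}.
\end{equation}
Along the way one checks that all denominators are strictly positive: $21\sqrt{2}-8\pi > 0$, and $4\kappa_{\sigma_i} - \pi\sqrt{|F_{\sigma_i}|/(2|S|)} > 0$ follows from $|F_{\sigma_i}| \ge 64|S|$ because the coefficient $25/8$ produced by the arcsine bound exceeds $\pi/\sqrt{2}$. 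Substituting this prefactor bound into Eq.~(\ref{eq:ineq_resource}) gives the sufficient condition in Eq.~(\ref{eq:prop3_01}).

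For the specialization, I would read off from Figure~\ref{Figure02}(b) that preparing $\ket{GHZ_{\mu,\nu}^{(X)}}$ on $\mu$ qubits costs $2\mu+1$ gates in total, $\mu-1$ two-qubit gates, and depth $\mu+1$, so that adding one such block changes the preparation cost by at most $2\mu+1$ in any of the three metrics, i.e.\ $\mathcal{S}_{\sigma_{i+1}} - \mathcal{S}_{\sigma_i} \le 2\mu+1$. Plugging this into Eq.~(\ref{eq:prop3_01}) and discarding the non-positive term $-2\mathcal{S}_{\sigma_i}$ yields $\mathcal{O}_{\sigma_i} + \mathcal{D} > 24\pi(2\mu+1)/(21\sqrt{2}-8\pi)$, which is Eq.~(\ref{eq:prop3_02}).

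The only real work is the rounding bookkeeping inside the prefactor estimate: one must control $\kappa_{\sigma_i}$ and $\kappa_{\sigma_{i+1}}$, each defined through $\operatorname{round}(\cdot)$, tightly enough to obtain the clean constant $24\pi/(21\sqrt{2}-8\pi)$ and to keep $\kappa_{\sigma_i}-\kappa_{\sigma_{i+1}}$ bounded away from zero. This is inherited directly from Proposition~\ref{result:prop1}, and is in fact its tightest endpoint: the GHZ shrinkage factor $2$ coincides with the smallest value of $2^{\mu}/\mu$ (attained at $\mu=2$) for which Eq.~(\ref{eq:relaxed_ineq2}) was verified, so no new estimate is required — one simply observes that the $\mu=2$ instance of that chain of inequalities transfers unchanged. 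No genuinely new obstacle arises beyond this verification.
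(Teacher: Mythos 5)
Your proposal is correct and follows essentially the same route as the paper: the authors explicitly state that Proposition~\ref{result:prop3} is obtained "by applying the same comparison argument as in Proposition~\ref{result:prop1}," with the shrinkage factor $2^{\mu}/\mu$ replaced by the constant $2$ (the $\mu=2$ endpoint of the earlier bound) and with $\mathcal{S}_{\sigma_{i+1}}-\mathcal{S}_{\sigma_i}\le 2\mu+1$ read off from Figure~\ref{Figure02}(b). Your additional checks on denominator positivity and the observation that no new estimate is needed beyond the $\mu=2$ instance of Eq.~(\ref{eq:relaxed_ineq2}) are consistent with, and slightly more explicit than, the paper's treatment.
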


The right-hand side of the inequality in Eq.~(\ref{eq:prop3_02}) scales linearly with the constraint set size $\mu$.
Even so, when $\mu$ is small compared to the total number of qubits, the inequality can often be satisfied by accounting only for the cost $\mathcal{D}$.
For larger values of $\mu$, the condition can still be met in practice once the oracle construction cost $\mathcal{O}$ is taken into account.
Therefore, incorporating disjoint constraint sets, even with relatively large $\mu$, can still be advantageous in terms of overall resource efficiency.

\subsection{Simulation results for cardinality constraints}
\label{subsec:results02}

To illustrate the effectiveness of the proposed initialization scheme, we consider the exact-cover problem as a representative combinatorial benchmark.
Given a collection $\mathcal{A} = \{A_{1}, A_{2}, \dots, A_{n}\}$ of subsets of a universal set $U = \{u_{1}, u_{2}, \dots, u_{m}\}$, 
a subcollection $\mathcal{A}'   \subseteq  \mathcal{A}$ is said to form an exact-cover of $U$ if it satisfies the following two conditions:
\begin{itemize}
    \item[(i)] Any two distinct subsets $A_{i}, A_{j} \in \mathcal{A}'$ are disjoint:
    \begin{equation*}
        A_{i} \cap A_{j} = \emptyset.
    \end{equation*}
    
    \item[(ii)] The union of all subsets in $\mathcal{A}'$ covers the entire set $U$:
    \begin{equation*}
        \bigcup_{A_{i} \in \mathcal{A}'} A_{i} = U.
    \end{equation*}
\end{itemize}
The objective is to find such a subcollection $\mathcal{A}'$ among the $2^{n}$ possible choices.

Equivalently, the problem can be formulated in terms of a binary string $x=x_{1}x_{2}\dots x_{n} \in \{0, 1\}^{n}$, where $x_{i}=1$ indicates that subset $A_{i}$ is selected.
For each $j \in \{1, 2, ..., m\}$, we define 
\begin{equation}
\label{eq:constraint_set_exact-cover}
C_{j} = \{i~ | u_{j} \in A_{i}\},
\end{equation}
the set of indices of subsets that contain $u_{j}$.
The exact-cover condition then requires that
\begin{equation}
\label{eq:constraints_exact-cover}
\sum_{i \in C_{j}}x_{i}=1, \quad j=1, \dots, m,
\end{equation}
which enforces that each element is covered by exactly one selected subset.

Suppose that, after performing the classical preprocessing described in Algorithm~\ref{alg:preprocessing1} and reordering the qubit indices, we obtain a collection of disjoint subsets $C'_{1}, \dots, C'_{t}$ and reduced subsets $R'_{t+1}, \dots, R'_{s}$ derived from overlapping constraints.
Let $k_{j}$ denote the number of qubits in the $j$-th subset.
For each disjoint subset $C'_{j}$, we prepare the Dicke state $\ket{D_{1}^{k_{j}}}$ acting on the qubits $\mathbf{q}_{C'_{j}}$ to represent the corresponding constraint.
For each reduced subset $R'_{j}$, we employ the state $\ket{D_{1,1}^{k_{j}}}$ defined in Eq.~(\ref{eq:relaxed_Dicke}), whose explicit form is
\begin{equation}
\label{eq:linear_combi_Dicke_ecp}
\ket{D_{1,1}^{k_{j}}}_{\mathbf{q}_{R'_{j}}}=\frac{1}{\sqrt{k_{j}+1}}\ket{00 \cdots 0}_{\mathbf{q}_{R'_{j}}} + \frac{\sqrt{k_{j}}}{\sqrt{k_{j}+1}}\ket{D_{1}^{k_{j}}}_{\mathbf{q}_{R'_{j}}}.
\end{equation}
The state $\ket{D_{1,1}^{k_{j}}}$ can be prepared using the circuit shown in Figure~\ref{Figure03}.

\begin{figure}[t]
\centering
\includegraphics[width=0.43\textwidth]{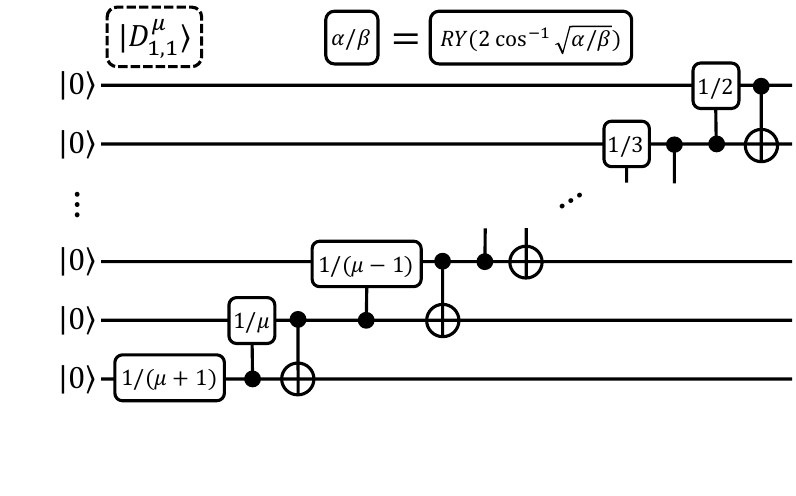}
\caption{
Quantum circuit for preparing $\ket{D_{1,1}^{\mu}}$ on $\mu$ qubits.
}
\label{Figure03}
\end{figure}



We now look at the following example of the exact-cover problem.
Let $U=\{1, 2, 3, 4, 5, 6, 7\}$ and the collection of subsets be given as
\begin{align}
\label{ex:exact-cover}
&A_{1}=\{1, 5\}, ~~ A_{2}=\{1, 3, 6\}, ~~ A_{3}=\{1, 2, 5\}, \nonumber \\
&A_{4}=\{1, 7\}, ~~ A_{5}=\{3, 4\}, ~~ A_{6}=\{4, 6\}, ~~ A_{7}=\{2, 4, 5\},\nonumber\\
&A_{8}=\{2, 7\}, ~~ A_{9}=\{6, 7\}, ~~ A_{10}=\{3, 5, 7\}.
\end{align}
This instance admits a unique exact-cover, given by $\mathcal{A}'=\{A_{3}, A_{5}, A_{9}\}$, so that $|S|$=1.
From the subsets in Eq.~(\ref{ex:exact-cover}), the corresponding constraint sets are constructed according to Eq.~(\ref{eq:constraint_set_exact-cover}) as
\begin{align}
&C_{1}=\{1, 2, 3, 4\}, ~~ C_{2}=\{3, 7, 8\}, ~~ C_{3}=\{2, 5, 10\}, \nonumber \\
&C_{4}=\{5, 6, 7\}, ~~ C_{5}=\{1, 3, 7, 10\},\nonumber \\
&C_{6}=\{2, 6, 9\}, ~~ C_{7}=\{4, 8, 9, 10\}.
\end{align}
Applying the classical preprocessing procedure described in Algorithm~\ref{alg:preprocessing1}, we first identify disjoint sets that will be used to construct the circuit for initialization.
When the threshold parameter $\eta_{\rm{th}}=0$, only strictly disjoint sets are allowed, and the preprocessing selects
\begin{equation}
C'_{1}=C_{1},~~C'_{2}=C_{4}.
\end{equation}
When the threshold is relaxed to $\eta_{\rm{th}}=1$, a small degree of overlap is permitted, which allows an additional set to be obtained as
\begin{equation}
R'_{3}=C_{7}\setminus (C_{1}\cup C_{4})=\{8, 9, 10\}.
\end{equation}
The initial quantum state is then constructed by preparing Dicke states on the subsystems corresponding to $C'_{1}$ and $C'_{2}$, while the subsystem associated with $R'_{3}$ is prepared according to the quantum state of the form in Eq.~(\ref{eq:linear_combi_Dicke_ecp}).

We perform simulations using IBM's Qiskit, an open-source software development kit for quantum circuit design and simulation~\cite{Qiskit2024}.
The oracle operator is assumed to be ideal to reduce simulation overhead and to focus on the resource impact of the initialization and diffusion components.
We note, however, that oracles for the exact-cover problem can be explicitly constructed using existing approaches, such as those proposed by Gilliam \textit{et al.}~\cite{Gilliam2021} and Jiang \textit{et al.}~\cite{Jiang2023}.
The operator $2\ket{0}\bra{0}^{\otimes n} - \mathbf{I}$ is implemented using the Hadamard gate $H$, Pauli-$X$ and Pauli-$Z$ gates, together with a multi-controlled $X$ gate~\cite{Nielsen2010, Wong2022}.
When simulating the noisy model, the multi-controlled $X$ gate is further decomposed into one- and two-qubit gates using the transpilation routines provided by Qiskit, which generate a circuit over the gate set $\{RZ, H, CNOT\}$.
To assess the effect of noise on the performance of different initialization strategies, we consider a depolarizing noise model applied to all operations except for the oracle operator.
Specifically, depolarizing noise with error rates of $10^{-5}$ and $10^{-4}$ is applied to one- and two-qubit gates, respectively.
For each value of the query number, the circuit is executed 20 times, each run consisting of $1000$ measurement shots.
The reported results correspond to the average number of measurements yielding a solution state.

\begin{figure}[t]
\centering
\includegraphics[width=0.47\textwidth]{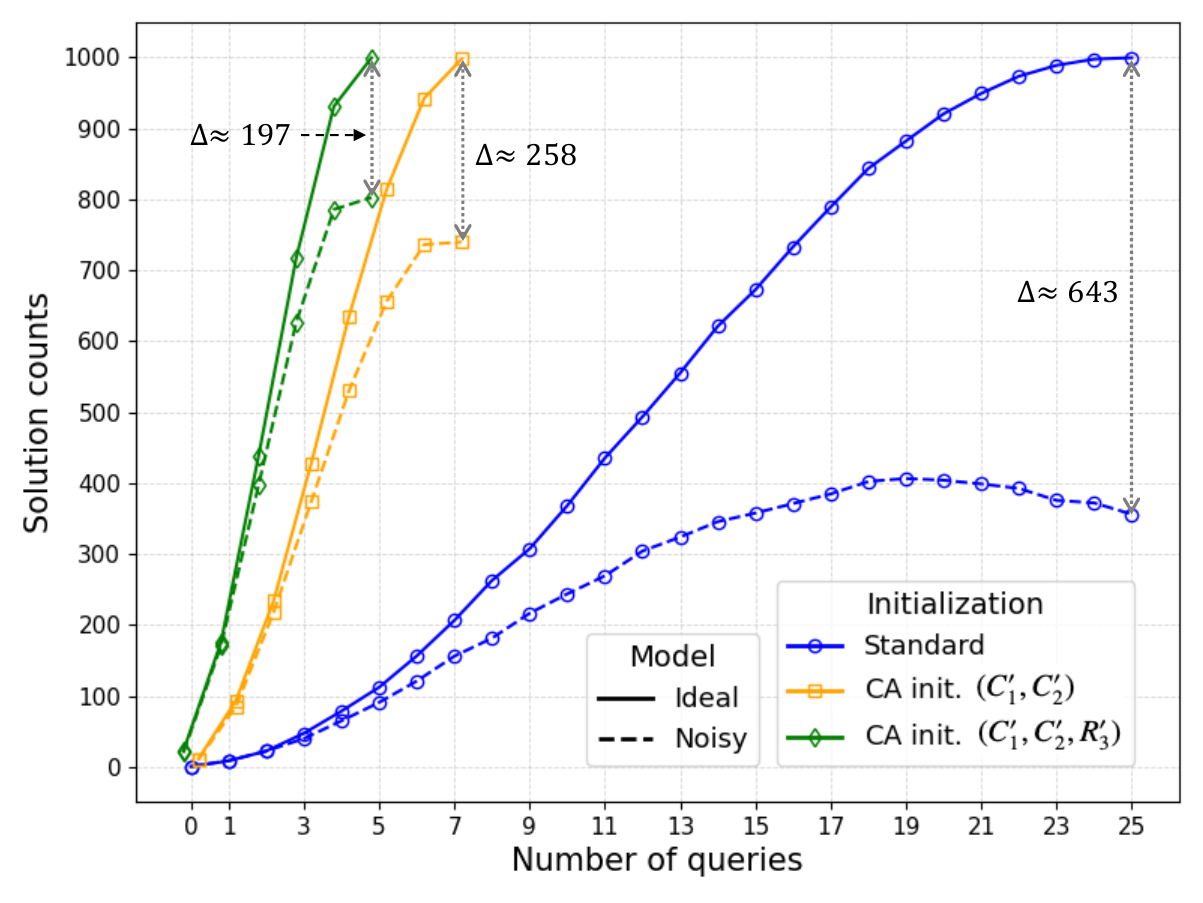}
\caption{
Comparison of the solution counts for the exact-cover instance in Eq.~(\ref{ex:exact-cover}) under different initialization strategies.
The standard uniform initialization is compared with two constraint-aware initialization~(CA init.) schemes incorporating $(C'_{1}, C'_{2})$ and $(C'_{1}, C'_{2}, R'_{3})$, respectively.
Solid (dashed) curves correspond to the ideal (noisy) model, where depolarizing noise with error rates of $10^{-5}$ and $10^{-4}$ is applied to one- and two-qubit gates.
Each data point is obtained by averaging over 20 independent circuit executions, each consisting of 1000 measurement shots.
Bidirectional arrows indicate the difference in the solution counts between the ideal and noisy models, evaluated at the optimal number of queries for each initialization scheme.
}
\label{Figure04}
\end{figure}

Figure~\ref{Figure04} shows the solution counts as a function of the number of queries for the exact-cover instance defined in Eq.~(\ref{ex:exact-cover}), obtained using the standard uniform initialization and two constraint-aware initialization strategies.
The first constraint-aware strategy incorporates only the disjoint constraint sets $(C'_{1}, C'_{2})$, while the second additionally includes the reduced set $R'_{3}$ derived from overlapping constraints.
For each initialization strategy, results are shown for both the ideal and noisy models.
In the standard initialization, the corresponding optimal number of queries is $25$, whereas the first constraint-aware strategy reduces the optimal number of queries to $7$, and the second strategy further reduces it to $5$.
Although the constraint-aware initialization increases the circuit complexity associated with initial state preparation and the diffusion operator, the substantial reduction in the optimal number of queries leads to improved robustness against noise at the respective optimal query numbers.

\begin{figure}[t]
\centering
\includegraphics[width=0.47\textwidth]{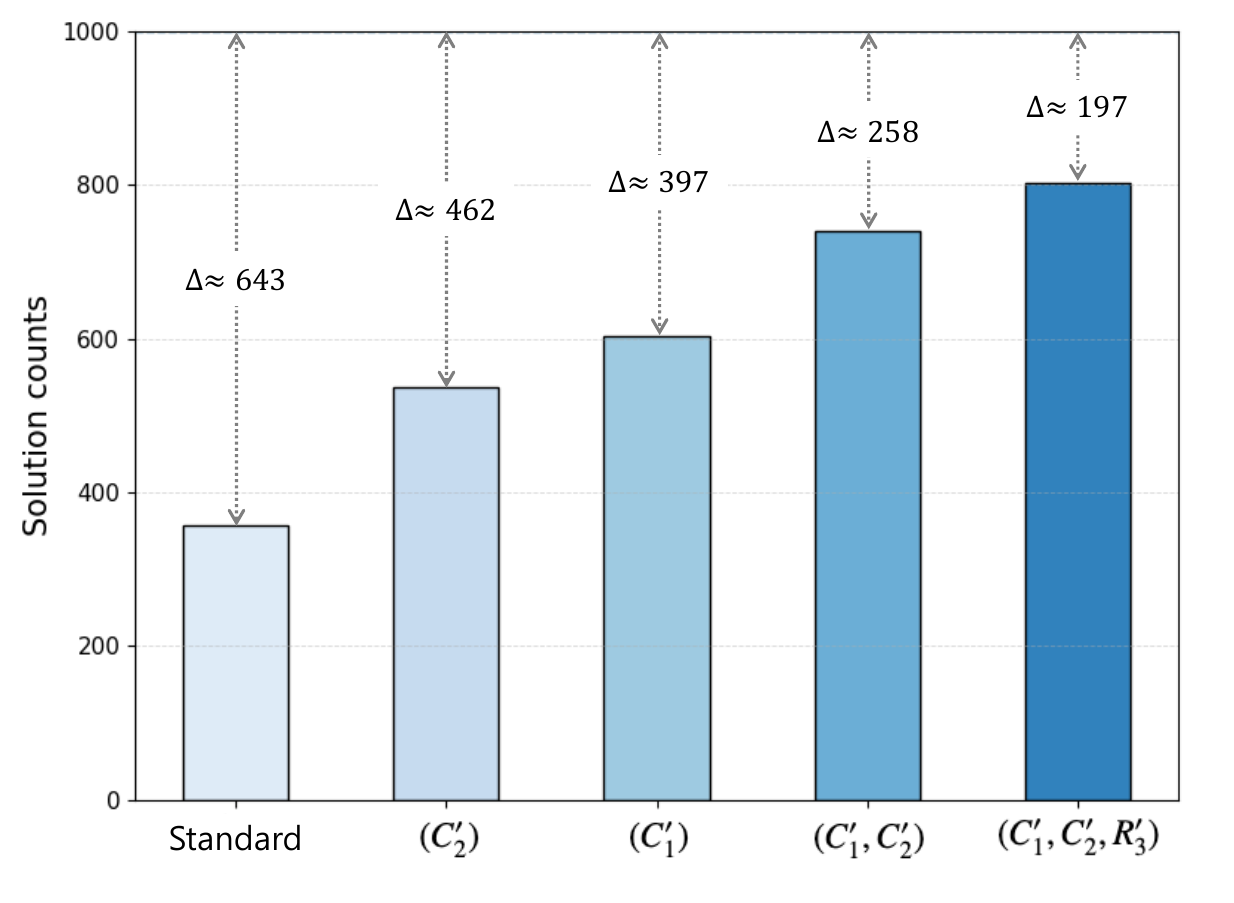}
\caption{
Difference in solution counts between the ideal and noisy models, evaluated at the respective optimal number of queries for different initialization strategies, where the bars show the solution counts in the noisy model and $\Delta$ denotes the corresponding difference.
All simulation settings are the same as in Figure~\ref{Figure04}.
The strategies include the standard uniform initialization and several constraint-aware initializations incorporating single constraint sets $(C'_{2})$ and $(C'_{1})$, as well as multiple constraint sets obtained through preprocessing, namely $(C'_{1}, C'_{2})$ and $(C'_{1}, C'_{2}, R'_{3})$.
}
\label{Figure05}
\end{figure}

Figure~\ref{Figure05} extends the comparison by examining the difference in solution counts between the ideal and noisy models at the respective optimal number of queries for a broader set of initialization strategies.
In addition to the standard uniform initialization and the two constraint-aware schemes based on $(C'_{1}, C'_{2})$ and $(C'_{1}, C'_{2}, R'_{3})$, which were discussed earlier, we also include constraint-aware initializations incorporating only a single constraint set, namely $(C'_{2})$ and $(C'_{1})$.
These two cases correspond to reduced search spaces of size 3/8 and 1/4 of the reference space of size $2^{10}$, respectively, which result in optimal query numbers of 15 and 12.
As shown in Figure~\ref{Figure05}, the solution counts obtained in the noisy model increase as we move from the standard initialization to constraint-aware strategies that incorporate larger constraint sets and additional disjoint constraint sets.
These results provide supporting evidence for the constraint selection strategy adopted in Algorithm~\ref{alg:preprocessing1}, where larger constraint sets are prioritized and the procedure is designed to identify as many disjoint constraint sets as possible.

Importantly, Figure~\ref{Figure05} shows that even in unfavorable preprocessing scenarios where only a single constraint set is identified, the corresponding constraint-aware initialization still yields a noticeable improvement over the standard uniform initialization.
This suggests that the proposed preprocessing-guided initialization offers a systematic and practically feasible way to incorporate constraints.
While incorporating additional constraints, including non-disjoint ones, may further improve performance, it is nontrivial to design a systematic initialization that directly incorporates non-disjoint constraint sets in a systematic manner.
Our preprocessing instead relaxes overlaps by extracting reduced subsets that can be implemented in the initialization, and the simulation results support the effectiveness of this approach.
Moreover, identifying a large number of disjoint constraint sets is itself challenging.
Accordingly, we adopt a greedy preprocessing strategy that does not aim to be optimal but provides a simple and implementable baseline that already delivers a clear performance advantage in our simulations.

\subsection{Simulation results for parity-type constraints}
\label{subsec:results03}

We now introduce a variant of the exact-cover problem to illustrate the incorporation of parity-type constraints through GHZ-type initialization.
The universal set $U$ and the collection of subsets $\{A_{i}\}$ are taken to be the same as in Sec.~\ref{subsec:results02}.
In contrast to the standard exact-cover formulation, we do not require the selected subsets to be disjoint. 
Instead, we require each element of $U$ to be covered exactly twice. 
We assign weights $\alpha_{i}$ to the subsets such that when $A_{i}$ is selected, it contributes $\alpha_{i}$ to the coverage count of each element it contains. 
We set $\alpha_{i}=2$ for $i \in \{3, 6, 10\}$ and $\alpha_{i}=1$ otherwise.
Introducing binary variables $x_{i} \in \{0, 1\}$, this condition is expressed as
\begin{equation}
\label{eq:constraints_exact-cover_weight}
\sum_{i \in C_{j}}\alpha_{i}x_{i}=2, \quad j=1, \dots, 7,
\end{equation}
and this instance admits a unique solution given by $\mathcal{A}'=\{A_{1}, A_{2}, A_{5}, A_{7}, A_{8}, A_{9}\}$.

Using the definition of $\hat{C}_{j}=\{i\in C_{j}|\alpha_{i}~\text{is odd}\}$, we obtain
\begin{align}
&\hat{C}_{1}=\{1, 2, 4\}, ~~ \hat{C}_{2}=\{7, 8\}, ~~ \hat{C}_{3}=\{2, 5\}, ~~ \hat{C}_{4}=\{5, 7\}, \nonumber \\
&\hat{C}_{5}=\{1, 7\}, ~~ \hat{C}_{6}=\{2, 9\}, ~~ \hat{C}_{7}=\{4, 8, 9\}.
\end{align}
Accordingly, by applying the preprocessing step in Algorithm~\ref{alg:preprocessing2}, we identify disjoint sets $C'_{1}=\hat{C}_{2}$ and $C'_{2}=\hat{C}_{3}$.
We then implement the GHZ-type initialization on the qubit registers associated with these constraint sets.
Since two disjoint parity-type constraint sets are incorporated, the effective search space is reduced by a factor of $2^{2}=4$, which yields an optimal number of queries of 12 for this instance.

\begin{figure}[t]
\centering
\includegraphics[width=0.47\textwidth]{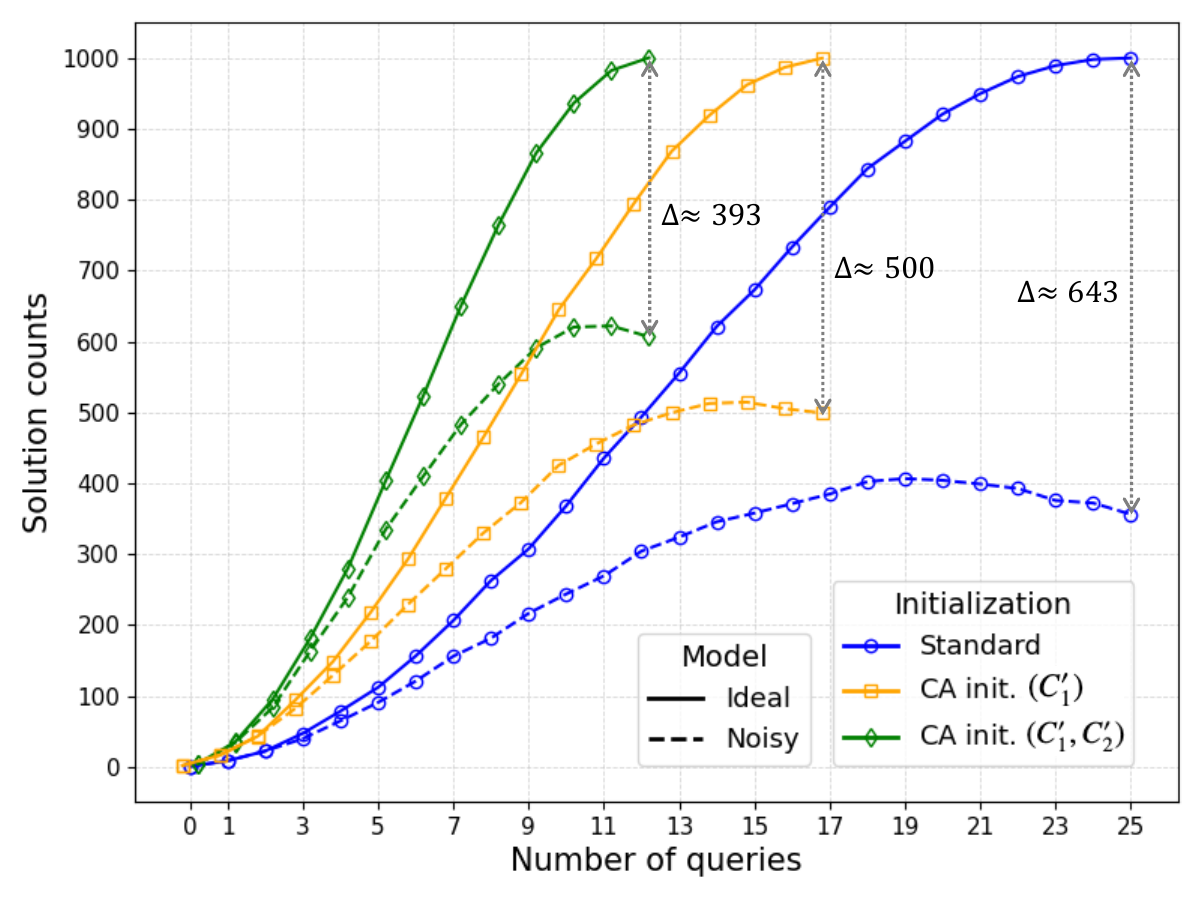}
\caption{
Comparison of the solution counts under different initialization strategies for parity-type constraints.
The standard uniform initialization is compared with two GHZ-type initialization incorporating $(C'_{1})$ and $(C'_{1}, C'_{2})$, respectively.
Consistent with the cardinality case, incorporating additional parity-type constraint information improves performance under noise, and even incorporating a single constraint set yields a clear advantage over the uniform initialization.
}
\label{Figure06}
\end{figure}

Figure~\ref{Figure06} presents the simulation results obtained under the same settings as in Sec.~\ref{subsec:results02}.
We plot the solution counts as a function of the number of queries for the ideal and noisy cases.
The results compare three initialization strategies that include applying GHZ-type initialization to both $C'_{1}$ and $C'_{2}$, applying it only to $C'_{1}$, and the standard uniform initialization.
As in the cardinality case, incorporating additional constraint information leads to improved performance under noise.
Even incorporating a single parity-type constraint set already yields a clear improvement over the standard uniform initialization.
Together with the results in Sec.~\ref{subsec:results02}, this confirms that the proposed constraint-aware initialization framework can systematically exploit different types of linear constraints to improve practical performance.

\section{Conclusion}
\label{sec:conclusion}

In this work, we presented a systematic framework for constraint-aware initialization in Grover's algorithm applied to combinatorial problems with linear constraints, which reduces the effective search space by incorporating constraint information already at the initialization step.
The proposed approach combines a simple classical preprocessing step with structured initial state preparation.
The preprocessing selects collections of mutually disjoint constraint sets that are jointly implementable in the initialization.
For each selected set, we prepare a structured quantum state that encodes the corresponding constraint information.
We use Dicke states to encode cardinality constraints and GHZ-type states to incorporate parity-type information.
The overall initial state is then constructed as a tensor product over the selected sets.
We also provided a conservative circuit-level resource analysis, showing that although structured initialization introduces additional preparation costs, it can still improve overall resource efficiency in terms of gate counts and circuit depth compared to the standard uniform initialization.
The practical effectiveness of the framework was further supported by numerical simulations of the exact-cover problem, demonstrating clear performance advantages, particularly in the presence of noise.

Several important directions remain for future work.
While the present analysis focused on specific instances of Dicke states and GHZ-type states, more general Dicke states are known to admit polynomial-time circuit constructions~\cite{Bartschi2019}, suggesting that similar resource-efficiency trends may extend beyond the cases considered here.
In addition, the circuit implementations examined in this work are not optimized for minimal resource usage, and further improvements may be achieved through more efficient state preparation schemes.
The classical preprocessing procedure introduced in this framework is likewise intended as a simple and practical baseline rather than an optimal approach, and identifying more effective preprocessing methods will be important for achieving both practical efficiency and near-optimal performance.

Another natural direction is to extend the framework beyond equality-type linear constraints to inequality constraints.
Such constraints may be incorporated by preparing superpositions of Dicke states across a range of Hamming weights, which suggests that the proposed approach could be generalized to a broader class of combinatorial problems.
Finally, it would be interesting to apply the proposed framework within optimization settings such as Grover adaptive search~\cite{Gilliam2021} and variational algorithms with the Grover mixer quantum alternating operator ansatz~\cite{Bartschi2020}, and to evaluate its impact on practically relevant combinatorial optimization problems.

\section*{ACKNOWLEDGMENTS}
This work was supported by the Institute for Information \& Communications Technology Promotion (IITP) (2019-0-00003) and the Global TOP strategic Research Program of the National Research Council of Science \& Technology (NST) (GTL25011-000) grant funded by the Korea government (MSIT). 
M.C. acknowledges support from the Korea Institute of Science and Technology Information (KISTI) (Grant No. K26L3M3C3).

\bibliography{reference}

\end{document}